  \let\smallfrac\tfrac
  \let\upDelta\Delta
  \let\maybesf\sffamily
  \let\maybesf\rmfamily
\setlist{noitemsep,topsep=0pt,parsep=0pt,partopsep=0pt}
\numberwithin{equation}{section}
\newcounter{and}
\newcommand{\institute}[1]{\newcommand{\@institute}{#1}}
\newcommand{\inst}[1]{\unskip\smash{$^#1$}\setcounter{and}{1}\ignorespaces}
\newcommand{\email}[1]{\href{mailto:#1}{#1}}
\renewcommand{\maketitle}{
  { 
    \raggedright
    \LARGE
    \noindent
    \bfseries
    \maybesf
    \@title
    \par
  }

  \vspace{1.5\baselineskip}

  { 
    \raggedright
    \renewcommand{\and}{
      \unskip, \ignorespaces
    }
    \noindent\ignorespaces\@author\par
  }

  \vspace{0.5\baselineskip}

  { 
    \raggedright
    \small
    \renewcommand{\and}{
      \par\stepcounter{and}
      \hangindent .8em\noindent
      \hbox to .8em{\smash{$^{\arabic{and}}$}}\ignorespaces
    }
    \ifnum\value{and}=0
      \noindent
    \else
      \hangindent .8em\noindent
      \hbox to .8em{\smash{$^{\arabic{and}}$}}\ignorespaces
    \fi
    \ignorespaces\@institute\par
  }
}
\renewenvironment{abstract}{
  \addvspace{1.5\baselineskip}
  \topsep=0pt\partopsep=0pt
  \trivlist\item[\hskip\labelsep\bfseries\maybesf Abstract.]
}{}
\newenvironment{acknowledgments}{
  \addvspace{1.5\baselineskip}
  \topsep=0pt\partopsep=0pt
  \trivlist\item[\hskip\labelsep\bfseries\maybesf Acknowledgments.]
}{}
\newcommand{\latinabbr}{\textit}
\newcommand{\eg}{\latinabbr{e.g.}\ }
\newcommand{\ie}{\latinabbr{i.e.}}
\newcommand{\cf}{\latinabbr{c.f.}\ }
\newcommand{\kk}{{\vec{k}}}
\newcommand{\pp}{{\vec{p}}}
\newcommand{\xx}{{\vec{x}}}
\newcommand{\defn}{=} 
\newcommand{\e}{\mathrm{e}}
\newcommand{\im}{\mathrm{i}}
\newcommand{\field}[1][K]{\mathbb{#1}}
\newcommand{\RR}{\field[R]}
\DeclareMathOperator{\arccoth}{arccoth}
\newcommand{\Symm}{\mathfrak{S}}
\DeclarePairedDelimiter{\abs}{\lvert}{\rvert}
\newcommand{\order}[1]{\mathrm{O}(#1)}
\DeclarePairedDelimiter{\E}{\langle}{\rangle} 
\newcommand{\norder}[1]{\mathord{:}{#1}\mathord{:}} 
\newcommand{\dif}{\mathrm{d}}
\newcommand{\od}[3][]{\frac{\dif^{#1}#2}{\dif#3^{#1}}}
\newcommand{\pd}[3][]{\frac{\partial^{#1}#2}{\partial#3^{#1}}}
\newcommand{\laplace}{\upDelta} 
\newcommand{\FF}{{}_2F_1}
\newtheoremstyle{maybesf}{}{}{\itshape}{}{\bfseries\maybesf}{.}{\labelsep}{}
\theoremstyle{maybesf}
\newtheorem{theorem}{Theorem}[section]
\newtheorem{lemma}[theorem]{Lemma}
\newtheorem{proposition}[theorem]{Proposition}
\definecolor{hypercolor}{rgb}{0,0.2,0.7}
\newcommand{\Hadamard}{\mathcal{H}}
\newcommand{\KG}{\mathrm{P}}            
\newcommand{\MM}{\mathbb{M}}            
\begin{document}

\title{Scale-Invariant Curvature Fluctuations from an Extended Semiclassical Gravity}

\author{
  Nicola Pinamonti\inst{1}\inst{2}
	\and
  Daniel Siemssen\inst{1}
}

\institute{
  Dipartimento di Matematica, Università di Genova, Via Dodecaneso 35,
  16146 Genova, Italy.
	\and
  INFN Sezione di Genova, Via Dodecaneso 33,
  16146 Genova, Italy.
	\\
  E-Mail: \email{pinamont@dima.unige.it}, \email{siemssen@dima.unige.it}
}

\maketitle

\begin{abstract}
  We present an extension of the semiclassical Einstein equations which couples $n$-point correlation functions of a stochastic Einstein tensor to the $n$-point functions of the quantum stress-energy tensor.
  We apply this extension to calculate the quantum fluctuations during an inflationary period, where we take as a model a massive conformally coupled scalar field on a perturbed de Sitter space and describe how a renormalization independent, almost-scale-invariant power spectrum of the scalar metric perturbation is produced.
  Furthermore, we discuss how this model yields a natural basis for the calculation of non-Gaussianities of the considered metric fluctuations.
\end{abstract}


\section{Introduction}
\label{sec:introduction}


Shortly after the discovery of the cosmic microwave background (CMB) by Penzias and Wilson \cite{penzias:1965}, Sachs and Wolfe predicted anisotropies in the angular temperature distribution \cite{sachs:1967}.
In their famous paper they discuss what was later coined the \emph{Sachs--Wolfe effect}: The redshift in the microwave radiation caused by fluctuations in the gravitational field and the corresponding matter density fluctuations.
In the standard model of inflationary cosmology the fluctuations imprinted upon the CMB are seeded by quantum fluctuations during inflation \cite{mukhanov:1981,mukhanov:1992,durrer:2008,ellis:2012}.

The usual computation of the power spectrum of the initial fluctuations produced by single-field inflation can be sketched as follows \cite{bartolo:2004,durrer:2008,ellis:2012}:
First, one introduces a (perturbed) classical scalar field $\varphi + \delta \varphi$, the inflaton field, which is coupled to a (perturbed) expanding spacetime $g + \delta g$.
Then, taking the Einstein equation and the Klein--Gordon equation at first order in the perturbation variables, one constructs an equation of motion for the \emph{Mukhanov--Sasaki variable} $Q = \delta \varphi + \dot\varphi\, H^{-1} \Phi$, where $\Phi$ is the \emph{Bardeen potential} \cite{bardeen:1980} and $H$ the Hubble constant.
$Q$ is then quantized\footnote{A recent discussion about the quantization of a such system can be found in \cite{eltzner:2013}.} (in the slow-roll approximation) and one chooses as the state of the associated quantum field a Bunch--Davies-like state.
Last, one evaluates the power spectrum $P_Q(k)$ of $Q$, \ie, the Fourier-transformed two-point distribution of the quantum state, in the super-Hubble regime $k \ll aH$ and obtains an expression of the form\footnote{An alternative definition of the power spectrum is $\mathcal{P}_Q(k) = (2 \uppi)^{-2} k^3 P_Q(k)$.}
\begin{equation}
  \label{eq:powerspectrum_Q}
  P_Q(k) = \frac{A_Q}{k^3}\, \left( \frac{k}{k_0} \right)^{n_s-1},
\end{equation}
where $A_Q$ is the amplitude of the fluctuations, $k_0$ a pivot scale and $n_s$ the spectral index.
Notice the factor of $k^{-3}$ in \eqref{eq:powerspectrum_Q} which gives the spectrum the `scale-invariant' \emph{Harrison--Zel'dovich} form if $n_s = 1$.
Depending on the details of model, $n_s \lesssim 1$ and there is also a possibility for a scale dependence of $n_s$ -- the `running' of the spectral index $n_s = n_s(k)$.

This result can then be related to the power spectrum of the \emph{comoving curvature perturbation} $\mathcal{R}$, which is proportional to $Q$, and can be compared with observational data.
Assuming adiabatic and Gaussian initial perturbations, the WMAP collaboration finds $n_s = 0.9608 \pm 0.0080$ (at $k_0 = \unit[0.002]{Mpc^{-1}}$) in a model without running spectral index and gravitational waves, excluding a scale-invariant spectrum at $5\sigma$ \cite{hinshaw:2012}.
Furthermore, the data of WMAP and other experiments can be used to constrain the deviations from a pure Gaussian spectrum, the so called non-Gaussianities, that arise in some inflationary models \cite{bartolo:2004,bennett:2012,maldacena:2003}.

In \cite{parker:2007,agullo:2008,agullo:2009,agullo:2011a} concerns have been raised whether the calculation leading to \eqref{eq:powerspectrum_Q} and similar calculations are correct: The authors argue that the two-point distribution of the curvature fluctuations has to be regularized and renormalized similarly to what is done in semiclassical gravity.
As a result the power spectrum is changed sufficiently that previously observationally excluded inflation models become realistic again.
On the contrary the authors of \cite{durrer:2009,marozzi:2011} argue that the adiabatic regularization employed in \cite{parker:2007,agullo:2008,agullo:2009,agullo:2011a} is not appropriate for low momentum modes if evaluated at the Horizon crossing and irrelevant for these modes if evaluated at the end of inflation.
We will come back to the issue of renormalization later in this paper and find that in our toy model the considered fluctuations must not be renormalized.

A slightly different approach to the calculation of the power spectrum based on \emph{stochastic gravity} can be found in \cite{roura:2000,roura:2008,hu:2008}.
In spirit similar to the approach presented in this paper, the authors equate fluctuations of the stress-energy tensor with the correlation function of the Bardeen potential.
In the super-Hubble regime they obtain an almost scale-invariant power spectrum.
Moreover, they discuss the equivalence of their stochastic gravity approach with the usual approach of quantizing metric perturbations.

Inspired by these works we follow an approach strictly different from the standard one.
Instead of quantizing a coupled system of linear inflaton and gravitational perturbations, we aim at extending the \emph{semiclassical Einstein equation} to describe metric fluctuations via the fluctuations in the stress-energy tensor of a quantum field.
More precisely, we choose as quantum matter a massive scalar quantum field and analyze carefully the expectation values of the corresponding stress-energy tensor and of their products over a Gaussian Hadamard state.
We then interpret the Einstein tensor as a stochastic field and equate its $n$-point function with the symmetrized $n$-point function of the quantum stress-energy tensor.
Subsequently we use the obtained system to discuss the influence of quantum fluctuations on metric perturbations over the flat patch of a de Sitter background.
We prove that in the limit of early time the power spectrum of these perturbations tends to the Harrison--Zel'dovich spectrum.
The results show that no special features of the metric nor any renormalization ambiguities enter the computation\footnote{In \cite{ford:2010} it was previously noted that fluctuations passively induced by purely conformal matter are not affected by renormalization ambiguity.}; only the Hadamard form of the underlying state matters.
Furthermore, contrary to what is found \eg in \cite{ford:2010}, the obtained power spectrum is not sensitive to the initial time, which is set to be at past infinity in this work.
This initial time independent power spectrum is in agreement with the non-perturbative analysis presented in \cite{frob:2012,frob:2013}.
Finally, we demonstrate that, despite the linearity of the quantum field, the three-point function of the considered metric perturbation does not vanish.

Although a complete theory of perturbations around semiclassical equations has not yet been developed
and although the conjectured equation among matter--gravity fluctuations is not supported by any formal derivation, the preliminary analysis presented in the present paper furnishes an almost scale-invariant power spectrum (up to very low frequencies) for certain metric perturbations very similar to the power spectrum that arises in the standard analysis based on the quantization of the Mukhanov--Sasaki variable evaluated in the Bunch--Davies state.
The main conceptual difference, is that in the extended semiclassical picture, there is less freedom in the choice of the quantum state for the passively induced metric perturbations. To a large extent, the quantum state has already been fixed by the requirement that the background metric satisfies the semiclassical equation.

The paper is organized as follows:
In the next section we shall briefly discuss the form of the matter quantum field we shall use and its backreaction on the background by means of semiclassical Einstein equation.
The basic idea about the influence of quantum matter fluctuations on metric perturbations is outlined in the third section.
The fourth section contains the construction of a simple model on de Sitter spacetime, the analysis of the power spectrum of the considered fluctuations and a brief discussion on the naturally obtained non-Gaussianities.
Finally, the last section, contains a discussion and a summary of the obtained results.


\section{Review: Semiclassical Einstein equation and scalar fields}

The semiclassical Einstein equation\footnote{We use units where $c = \hbar = 8 \uppi\, \mathrm{G} = 1$.}
\begin{equation}
  \label{eq:einstein}
  G_{ab} = \omega(\norder{T_{ab}})
\end{equation}
for a quantum state $\omega$ and a (conserved) quantum stress-energy tensor $T_{ab}$ on a globally hyperbolic spacetime $(M, g)$ allows us to analyze the backreaction of quantum fields on curvature, see \eg \cite{wald:1977,dappiaggi:2008,starobinsky:1980}.

In this paper, for simplicity, we will take as the quantum field a free \emph{conformally coupled scalar field} $\varphi$ with mass $m$, \ie, the classical dynamics are governed by the \emph{Klein--Gordon equation} $\KG \varphi \defn -\Box \varphi +\, \smallfrac{1}{6} R\, \varphi + m^2 \varphi = 0$.
The stress-energy tensor associated to this scalar field can be written as
\begin{equation*}
  T_{ab} = \frac{2}{3} (\partial_a \varphi) (\partial_b \varphi) - \frac{1}{6} g_{ab} (\partial_c \varphi) (\partial^c \varphi) - \frac{1}{6} \varphi \left( m^2 g_{ab} + \frac{R}{6} g_{ab} - R_{ab} + 2 \nabla_a \partial_b \right) \varphi,
\end{equation*}
which differs from the standard definition by the term $\smallfrac{1}{3} g_{ab}\, \varphi\, \KG \varphi$.
This term was introduced in \cite{moretti:2003} to account for the trace anomaly in the quantum theory.
A quick calculation gives us the trace of $T_{ab}$ as
\begin{equation}
  \label{eq:T_trace}
  T \defn g^{ab} T_{ab} = - m^2 \varphi^2 + \frac{1}{3} \varphi\, \KG \varphi.
\end{equation}

The quantization of scalar fields on globally hyperbolic manifolds in the algebraic approach has been discussed thorougly in the existing literature, see \eg \cite{brunetti:2003}, and will not be repeated here.
To accomplish the quantization we shall assume that the quantum state $\omega$ is a
\emph{quasi-free state},\footnote{A quasi-free state is also called \emph{Gaussian state}.} hence all its $n$-point distributions $\omega_n$ descends from its two-point distribution $\omega_2$.
Furthermore, to make \eqref{eq:einstein} well-defined, we will require that $\omega_2$ satisfies the \emph{microlocal spectrum condition} which is equivalent to it being of \emph{Hadamard form} \cite{radzikowski:1996,brunetti:1996,sahlmann:2001}, \ie, if $x,y \in M$ are in a geodesically convex neighbourhood, $\omega_2(x,y)$ is of the form
\begin{equation}
  \label{eq:hadamard}
  \omega_2 = \lim_{\varepsilon \to 0^+} \left( \frac{U}{\sigma_\varepsilon} + V \ln {\frac{\sigma_\varepsilon}{\lambda}} \right) + W = \Hadamard + W,
\end{equation}
where $\sigma_\varepsilon(x,y) = \sigma(x,y) - \im\varepsilon\,\big(t(x)-t(y)\big) + \varepsilon^2$ with $\sigma$ being the squared signed geodesic distance and $t$ a time function, $U(x,y)$ and $V(x,y) = \sum_n V_n(x,y)\, \sigma(x,y)^n$ are smooth functions depending only on the local geometry and $W(x,y)$ is a smooth, symmetric function which characterizes the state.

The distribution $\Hadamard$ in \eqref{eq:hadamard} is called the \emph{Hadamard singularity} and leads to the singular UV behaviour in the quantum theory.
When we regularize $\omega$, by point-splitting regularization, to yield a finite Einstein tensor on the left-hand side of \eqref{eq:einstein}, $\Hadamard$ is subtracted from $\omega_2$.
This regularization is indicated by the normal ordering $\norder{\,\cdot\,}$ in \eqref{eq:einstein}, \ie, in the semiclassical Einstein equation we are evaluating \emph{Wick polynomials}.
The ambiguity in the normal ordering prescription leading to the \emph{renormalization freedom} has been rigorously analyzed in \cite{hollands:2001,hollands:2005}, where it was shown that in a locally covariant quantum field theory the ambiguity amounts to adding polynomials in local curvature and the mass parameter.

We thus note the following regarding the trace of the semiclassical Einstein equation \eqref{eq:einstein}:
Selecting a Hadamard state $\omega$ for the scalar field $\varphi$ on the globally hyperbolic spacetime $(M,g)$, we obtain (see also \cite{hollands:2001,dappiaggi:2008})
\begin{equation*}
  - R = \omega(\norder{T}) = - m^2 [W] + 2 [V_1] + \alpha\, m^4 + \beta\, m^2 R + \gamma\, \Box R,
\end{equation*}
where $[\cdot]$ denotes the Synge bracket $[f](x) = f(x,x)$, $[W]$ is a state dependent contribution, $[V_1]$ is due to the trace anomaly \cite{wald:1978}, and $\alpha, \beta, \gamma$ are renormalization constants.
Since we want to have a well-posed initial value problem, according to \cite{wald:1977}, we shall choose $\gamma = 0$.
Furthermore, different choices of $\beta$ in the equation $-R = \omega(\norder{T})$, can be reabsorbed in the redefinition of the Newton constant.
Because we do not want to change it, we shall fix $\beta = 0$.
With these choices we notice that the single remaining freedom is in the constant $\alpha$ whose value also depends on the choice of the scale $\lambda$ in the Hadamard singularity \eqref{eq:hadamard}.


\section{Higher moments of the semiclassical Einstein equation}

As noted in the introduction above, in the semiclassical approach we are equating a classical quantity, the Einstein tensor, with the expectation value of a quantum observable, the quantum stress-energy tensor, \ie, a probabilistic quantity.
Such a system could make sense only when the fluctuations of the quantum stress-energy tensor can be neglected.
Unfortunately, as also noticed in \cite{pinamonti:2011}, the variance of quantum unsmeared stress-energy tensor is always divergent even when proper regularization methods are considered.
\footnote{\textbf{We recall that, although $\norder{T_{ab} T_{cd}}$ (the normally ordered product of two stress-energy tensors) is a well-defined field, the variance of $\norder{T_{ab}}$ corresponds to $\norder{T_{ab}}\norder{T_{cd}}$ (the product of two normally ordered stress-energy tensors), which is a divergent quantity.}} 
The situation is slightly better when a smeared stress-energy tensor is analyzed.
In that way, however, the covariance of \eqref{eq:einstein} gets lost.
A possible way out is to allow for random perturbations also on the left hand side of \eqref{eq:einstein}.
This is the point of view we shall assume in this paper.
Notice that this is similar in spirit with what happens in the analysis of the random forces leading to Brownian motions by means of Langevin equations.

Consider now the Einstein tensor as a random field.
Then we could imagine to equate the probability distribution of the Einstein tensor with the probability distribution of the stress-energy tensor.
This suggestion, however, seems largely void without a possibility of actually computing the probability distributions of the stress-energy tensor because, as discussed above, its moments of order larger than one are divergent.

Instead we may approach this idea by equating the hierarchy of $n$-point functions of the Einstein tensor with that of the stress-energy tensor:
\begin{subequations}
  \label{eq:einstein_prob}
  \begin{align}
    \label{eq:einstein_prob_1}
    \E[\big]{G_{ab}(x_1)} & = \omega\big( \norder{T_{ab}(x_1)} \big), \\
    \label{eq:einstein_prob_2}
    \E[\big]{\delta G_{ab}(x_1)\, \delta G_{c'd'}(x_2)} & = \frac{1}{2}\, \omega\big( \norder{\delta T_{ab}(x_1)}\, \norder{\delta T_{c'd'}(x_2)} + \norder{\delta T_{c'd'}(x_2)}\, \norder{\delta T_{ab}(x_1)} \big), \\
    \notag
    & \mathrel{\makebox[\widthof{=}]{\vdots}} \\
    \label{eq:einstein_prob_n}
    \E[\big]{(\delta G)^{\boxtimes n}(x_1, \ldots, x_n)} & = \Symm \Big[ \omega\big(\norder{\delta T}^{\boxtimes n}(x_1, \ldots, x_n) \big) \Big], \quad n > 1,
  \end{align}
\end{subequations}
where $\omega$ is a Hadamard state and we defined $\delta G_{ab} \defn G_{ab} - \E{G_{ab}}$ and $\norder{\delta T_{ab}} \defn \norder{T_{ab}} - \omega(\norder{T_{ab}})$.\footnote{The symbol $\boxtimes$ stands for the tensor product at different points, the so called exterior tensor product.}
The symmetrization on the right-hand side, which we denote by $\Symm$, is necessary because the classical quantity on left-hand side is invariant under permutation of the points $x_i$.

We emphasize that we are equating singular functions in \eqref{eq:einstein_prob_n}.
Having all the $n$-point functions of the Einstein stochastic tensor, we can easily construct an equation for the moments of the smeared Einstein tensor which equals the moments of a smeared stress-energy tensor by smearing both sides of \eqref{eq:einstein_prob} with tensor products of a smooth compactly supported function.
This smearing also automatically accounts for the symmetrization in \eqref{eq:einstein_prob}.

Furthermore we stress that equating moments, obtained smearing both side of \eqref{eq:einstein_prob}, is not equivalent to equating probability distributions.
Although it is also possible to arrive at a description in terms of moments when coming from a probability distribution, the inverse mapping is not necessarily well-defined.
Successful attempts to construct a probability distribution for smeared stress-energy tensors can be found in \cite{fewster:2010,fewster:2012b}.

Consider now a Gaussian Hadamard state $\omega$ of a conformally coupled scalar field $\varphi$ on a spacetime $(M, \overline g)$, the \emph{background spacetime}.
Our aim is to calculate the perturbation of the background spacetime as specified by the
correlation functions
on the left-hand side of \eqref{eq:einstein_prob} due to the fluctuations of the stress-energy in the quantum state $\omega$ as specified on the right-hand side of \eqref{eq:einstein_prob}.
In particular we will require that $\omega$ satisfies \eqref{eq:einstein_prob_1} when we identify the Einstein tensor of the background spacetime $\overline G_{ab}$ with $\E{G_{ab}}$ (\cf \cite{pinamonti:2011,pinamonti:2013a} for a discussion of the solutions of the semiclassical Einstein equation in cosmological spacetimes).
Note that by choosing this Ansatz we are completely ignoring any backreaction effects of the fluctuations to the background metric and evaluate the stress-energy tensor on a state specified on the background spacetime.

Later on we consider perturbations of the scalar curvature induced by a `Newtonianly' perturbed FLRW metric.
For this reason it will be sufficient to work with the trace of \eqref{eq:einstein_prob} (using the background metric) instead of the full equations.
With the definition
\begin{equation*}
  S \defn - \overline g^{ab} G_{ab},
\end{equation*}
such that $R = \E{S}$, the equations \eqref{eq:einstein_prob} simplify to
\begin{subequations}
  \label{eq:einstein_prob_trace}
  \begin{align}
    \label{eq:einstein_prob_trace_1}
    \E[\big]{S(x_1)} & = m^2 [W] - 2 [V_1] - \alpha\, m^4, \\
    \label{eq:einstein_prob_trace_2}
    \E[\big]{S(x_1)\, S(x_2)} - \E[\big]{S(x_1)}\E[\big]{S(x_2)} & = m^4 \big( \omega_2^2(x_1, x_2) + \omega_2^2(x_2, x_1) \big), \\
    \notag
    & \mathrel{\makebox[\widthof{=}]{\vdots}} \\
    \label{eq:einstein_prob_trace_n}
    \E[\big]{(S - \E{S})^{\boxtimes n}(x_1, \ldots, x_n)} & = 2^n m^{2n}\, \Symm \left[ \sum_G \prod_{i, j}\frac{\omega_2^{\lambda^G_{ij}}(x_i, x_j)}{\lambda^G_{ij}!} \right], \quad n > 1,
  \end{align}
\end{subequations}
where the sum is over all directed graphs $G$ with $n$ vertices $1, \ldots, n$ with two arrows at every vertex directed to a vertex with a larger label.
$\lambda^G_{ij} \in \{0,1,2\}$ is the number of arrows from $i$ to $j$.
If we perform the symmetrization in \eqref{eq:einstein_prob_trace_n}, we see that the sum is over all acyclical directed graphs with two arrows at every vertex.
For illustration some graphs are shown in Fig. \ref{fig:graphs}.

\begin{figure}
  \centering
	\includegraphics{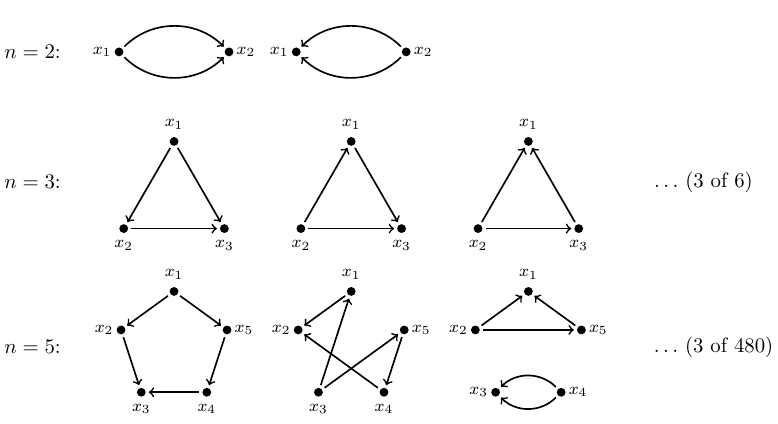}
  \caption{A few graphs illustrating \eqref{eq:einstein_prob_trace_n} for $n=2$, $n=3$ and $n=5$.}
  \label{fig:graphs}
\end{figure}

To obtain \eqref{eq:einstein_prob_trace_2} and \eqref{eq:einstein_prob_trace_n}, note that $\norder{\varphi^2} - \omega(\norder{\varphi^2})$ doesn't depend on the choice of normal ordering (indeed this holds true if we replace $\varphi^2$ with $L \varphi^2$, for any linear operator $L$), and thus only \eqref{eq:einstein_prob_trace_1} needs to be renormalized.
Therefore we may choose normal ordering with respect to $\omega_2$ to see that the combinatorics are equivalent to those in Minkowski space.
Moreover, as $\omega_2$ is a bisolution of the Klein--Gordon equation, the term $\smallfrac{1}{3} \varphi\, \KG \varphi$ which causes the conformal anomaly in \eqref{eq:einstein_prob_trace_1} does not contribute to the higher moments.


\section{Fluctuations around a de Sitter spacetime}
\label{sec:fluctuations_around_de_sitter}

We shall now specialize the general discussion presented above to a special class of fluctuations around an exponentially expanding, flat FLRW universe; the perturbed spacetime is called a \emph{Newtonianly} perturbed FLRW spacetime in \cite{ishibashi:2006}.
That is, the background spacetime $(M, \overline g)$ is given in conformal time $\tau < 0$ by the metric tensor
\begin{equation*}
  \label{eq:background_metric}
  \overline g \defn (H \tau)^{-2} (-\dif\tau \otimes \dif\tau + \delta_{ij}\; \dif x^i \otimes \dif x^j)
\end{equation*}
and we consider fluctuations of the scalar curvature derived from metric perturbations of the form
\begin{equation}
  \label{eq:perturbed_metric}
  g \defn (H \tau)^{-2} \big( -(1 + 2 \Psi)\, \dif\tau \otimes \dif\tau + (1 - 2 \Psi)\, \delta_{ij}\, \dif x^i \otimes \dif x^j \big).
\end{equation}
The kind of fluctuations that we consider by choosing \eqref{eq:perturbed_metric} resemble those that are present in single-scalar field inflation in the longitudinal gauge, where there are only `scalar fluctuations' without anisotropic stress (so that the two Bardeen potentials coincide) \cite{ellis:2012,mukhanov:1992}.
Notice that, for classical metric perturbation, these constraints descend from the linearized Einstein equation, however, \latinabbr{a priori} there is no similar constraint in \eqref{eq:einstein_prob_2}.
Despite these facts, we proceed analyzing the influence of quantum matter on this special kind of metric perturbations and we also refrain from discussing the gauge problem associated to choosing a perturbed spactime; the chosen perturbation potential $\Psi$ is \emph{not} gauge invariant.

We can now calculate the various perturbed curvature tensors and obtain in particular
\begin{equation*}
  S = 12 H^2 (1 - 3 \Psi) + 24 H^2 \tau\, \Psi' - 6 H^2 \tau^2 \Psi'' + 2 H^2 \tau^2\, \laplace \Psi + \order{\Psi^2}
\end{equation*}
for the trace of the perturbed Einstein tensor, where $\laplace$ is the standard Laplace operator.
Dropping terms of higher than linear order, this can also be written as
\begin{equation}
  \label{eq:hyp_op}
  S - \E{S} = - 6 H^2 \tau^4 \Bigg( \pd[2]{}{\tau} - \frac{1}{3} \laplace \Bigg)\, \tau^{-2} \Psi,
\end{equation}
where $\E{S} = 12 H^2$ is nothing but the scalar curvature of the background spacetime.
Notice that, up to a rescaling, the operator on the right hand side of \eqref{eq:hyp_op} looks like a wave operator with the characteristic velocity equal to $1/\sqrt{3}$ of the velocity of light.

We can now evaluate the influence of quantum matter fluctuations on the metric fluctuations by inverting the previous hyperbolic operator by means of its retarded fundamental solutions $\Delta_R$ and applying it on both sides of \eqref{eq:einstein_prob_trace_2} and \eqref{eq:einstein_prob_trace_n}.
From \eqref{eq:einstein_prob_trace_2} we can then (formally) obtain the two-point correlation functions of $\Psi$ (per definition $\E{\Psi} = 0$):
\begin{equation}
  \label{eq:retarded_applied_to_w}
  \E[\big]{\Psi(x_1)\, \Psi(x_2)} = m^4 \iint_{\RR^8} \Delta_R(x_1, y_1)\, \Delta_R(x_2, y_2) \big( \omega_2^2(y_1, y_2) + \omega_2^2(y_2, y_1) \big)\, \dif^4 y_1\, \dif^4 y_2.
\end{equation}
Employing the retarded fundamental solutions in the inversion without adding any solution of \eqref{eq:hyp_op}, we are implicitly assuming that all the $n$-point functions of the perturbation potential $\Psi$ are sourced by quantum fluctuations.
Here we are only interested in evaluating their effect.

\subsection{Bunch--Davis state and the squared two-point distribution}

In order to proceed with our analysis, we shall specify the quantum state $\omega$ for the matter theory.
Following the Ansatz discussed in the preceding section, we choose a quasi-free Hadamard state which satisfies the semiclassical Einstein equation on the background.
In particular, we require that $\omega$ solves \eqref{eq:einstein_prob_trace_1}, namely
\begin{equation*}
  12 H^2 = m^2 [W] - 2 [V_1] - \alpha\, m^4.
\end{equation*}
The right hand side of the previous equation is characterized by three contributions:
The state dependent part $m^2 [W]$, the anomaly part $2 [V_1]$, which takes a very simple form and is proportional to $H^4$, and the renormalization freedom $\alpha m^4$ (remember that we fixed the other renormalization constants to zero).
For the semiclassical Einstein equation to hold, we therefore have to require that $[W]$ is a constant.
Then, having fixed $H$ and $m$ (no matter their absolute value), there is always a choice of $\alpha$ for which the chosen metric $\overline{g}$ and $\omega$ satisfy the semiclassical Einstein equation.
On a de Sitter spacetime these criteria are satisfied by the well known Bunch--Davis state $\omega_{BD}$ \cite{bunch:1978} -- the only Hadamard state which is invariant under the symmetry group of de Sitter space.

In order to evaluate the influence of the quantum matter fluctuations on $\Psi$ via equation \eqref{eq:retarded_applied_to_w}, we have to discuss the form of the two-point distribution of the chosen state and its square.
The two-point distribution of the Bunch--Davis vacuum for a massive, conformally coupled field on de Sitter spacetime, takes the well known expression \cite{bunch:1978,schomblond:1976,allen:1985,kirsten:1993}
\begin{equation*}
  \label{eq:bunch_davies_1}
  \omega_{BD} \defn
  \frac{m^2}{16 \uppi \cos(\uppi\, \nu)}\,
  \FF\left(\frac{3}{2}+\nu, \frac{3}{2}-\nu; 2; \frac{1+Z}{2}\right),
  \quad
  \nu \defn \sqrt{\frac{1}{4} - \frac{m^2}{H^2}},
\end{equation*}
where $\FF$ is the ordinary hypergeometric function and the necessary $\varepsilon$-prescription has been omitted.
Moreover, the auxiliary function $Z(x_1, x_2)$ appearing in the preceding formula is nothing but the geodesic distance in the five dimensional Minkowski space into which de Sitter space can be embedded as an hyperboloid.
Omitting again the necessary $\varepsilon$-prescription, $\omega_{BD}$ can also be recast into
\begin{equation*}
  \label{eq:bunch_davies_2}
  \omega_{BD} = \frac{H^2}{8 \uppi^2} (1-Z)^{-1} + \frac{m^2}{16 \uppi^2}\, \FF\left(\frac{3}{2} + \nu, \frac{3}{2} - \nu; 2; \frac{1-Z}{2}\right) \ln \frac{1-Z}{2} + \frac{m^2}{16 \uppi^2} \widetilde W
\end{equation*}
with a certain smooth function $\widetilde W = (\widetilde W \circ Z)(x_1, x_2)$.

In the conformally flat patch of the de Sitter spacetime $Z$ takes a simple form and can be expressed in terms of the squared, signed geodesic distance of the conformally related Minkowski space as
\begin{equation*}
  Z(x_1,x_2) = 1 - \frac{\sigma_\MM(x_1, x_2)}{2 \tau_1 \tau_2},
  \quad
  \sigma_\MM(x_1,x_2) \defn - (\tau_1 - \tau_2)^2 + (\xx_1 - \xx_2 )^2.
\end{equation*}
On $(M, \overline g)$ we therefore obtain an expression for $\omega_{BD}$ which resembles the Hadamard form on the conformally related Minkowski space:
\begin{equation}
  \label{eq:bunch_davies_expansion}
  \omega_{BD}(x_1, x_2) = \lim_{\varepsilon \to 0^+} \frac{H^2}{4 \uppi^2} \frac{\tau_1 \tau_2}{\sigma_\MM(x_1, x_2) + 2 \im \varepsilon\, (\tau_1 - \tau_2) + \varepsilon^2} + \text{less singular terms}.
\end{equation}
Notice that the less singular contributions vanish in the limit of zero mass.

As can be seen in \eqref{eq:einstein_prob_trace_2}, we need to compute the square of the two-point distribution of the state in question, \ie, in this case the square of $\omega_{BD}$.
For our purposes it will be sufficient to compute the square of the leading singularity in the Hadamard state.
To this end we notice that, up to a trivial rescaling by $H^2 \tau_1 \tau_2$, the leading singularity coincides with the vacuum state $\omega_\MM$ for a massless field theory on the conformally related Minkowski spacetime.
The square of the massless two-point distribution on Minkowski space is
\begin{equation*}
  \omega_\MM(x_1, x_2)^2 = \lim_{\varepsilon\to 0^+} \left( \frac{1}{4 \uppi^2} \frac{1}{\sigma_\MM(x_1, x_2) + 2 \im \varepsilon\, (\tau_1 - \tau_2) + \varepsilon^2} \right)^2.
\end{equation*}
Writing $\omega_\MM$ in terms of its spatial Fourier transform, an expression for the spatial Fourier transform of the square of the massless Minkowski vacuum can be obtained as
\begin{equation}
  \label{eq:spectrum_omega2}
  \omega_\MM(x_1,x_2)^2 = \lim_{\varepsilon \to 0^+} \frac{1}{128 \uppi^5}
  \int_{\RR^3} \e^{\im \kk \cdot (\xx_1 - \xx_2)}
  \int_{k}^\infty \e^{-\im p\, (\tau_1 - \tau_2)}\, \e^{- \varepsilon p }\, \dif p\, \dif^3\!\kk,
\end{equation}
where $x_i \defn (\tau_i,\xx_i)$ and $k \defn \abs{\kk}$.
Later on we will use this expression in order to obtain the power spectrum of $\Psi$.

\subsection{Power spectrum of the metric perturbations}

We want to compute the power spectrum $P(\tau, \kk)$ of the two-point correlation of $\Psi$ at the time $\tau$.
Since both the spacetime and the chosen state are invariant under spatial translation, it can be defined as
\begin{equation*}
  \E[\big]{\Psi(\tau, \xx_1)\, \Psi(\tau, \xx_2)} \defn \frac{1}{(2 \uppi)^3} \int_{\RR^3} P(\tau,\kk)\, \e^{\im \kk \cdot (\xx_1 - \xx_2)}\, \dif^3\!\kk.
\end{equation*}
To obtain $P$, we first need an expression for the retarded operator $\Delta_R$ corresponding to \eqref{eq:hyp_op}:
\begin{align*}
  (\Delta_R\,f)(\tau, \xx) & = \frac{1}{(2 \uppi)^3} \int_{\RR^3} \int_{-\infty}^{\tau}
  \widehat{\Delta}_R(\tau, \tau_1, \kk)\,
  \widehat{f}(\tau_1, \kk)\, \e^{\im \kk \cdot \xx}\, \dif\tau_1\, \dif^3\!\kk,
  \quad \text{with}
  \\
  \widehat{\Delta}_R(\tau, \tau_1, \kk) & \defn
  - \frac{1}{6 H^2} \frac{\tau^2}{\tau_1^4} \frac{\sqrt{3}}{k} \sin\left(k\, (\tau - \tau_1)/\sqrt{3}\right),
\end{align*}
where $f$ is a compactly supported, smooth function, $\,\widehat{\cdot}\,$ is the spatial Fourier transform.\footnote{The conventions for the Fourier transform observed here are:
\begin{equation*}
  \widehat{f}(k) = \int_{\RR^3} f(x)\, \e^{-\im \kk \cdot \xx}\, \dif^3\!\xx,
  \quad
  f(x) = (2 \uppi)^{-3} \int_{\RR^3} \widehat{f}(k)\, \e^{\im \kk \cdot \xx}\, \dif^3\!\kk.
\end{equation*}}
We can then rewrite \eqref{eq:retarded_applied_to_w} in Fourier space to obtain
\begin{equation*}
  P(\tau, \kk) = 2 m^4 \int_{-\infty}^\tau \int_{-\infty}^\tau
  \widehat{\Delta}_R(\tau, \tau_1, \kk)\, \widehat{\Delta}_R(\tau, \tau_2, \kk)\,
  \widehat{\omega^2_{BD}}(\tau_1, \tau_2, \kk)\,
  \dif\tau_1\, \dif\tau_2.
\end{equation*}
Note that the symmetrization of the state is taken care of indirectly by the equal limits of the two integrations.

As discussed above (see \eqref{eq:bunch_davies_expansion} and the following paragraph), we will compute the contribution due to the leading singularity of the Hadamard state:
\begin{equation*}
  P_0(\tau, \kk) \defn 2 H^4 m^4 \int_{-\infty}^\tau \int_{-\infty}^\tau
  \widehat{\Delta}_R(\tau, \tau_1, \kk)\, \widehat{\Delta}_R(\tau, \tau_2, \kk)\,
  \tau_1^2\, \tau_2^2\, \widehat{\omega^2_\MM}(\tau_1, \tau_2, \kk)\,
  \dif\tau_1\, \dif\tau_2.
\end{equation*}
We emphasize at this point that, because of the form of \eqref{eq:spectrum_omega2} and of $\widehat{\Delta}_R$, no $\kk-$infrared singularity appears in $P_0(\tau,\kk)$ at finite $\tau$.
Recall also that the error we are committing, using $P_0(\tau,\kk)$ at the place of $P(\tau,\kk)$, tends to vanish in the limit of small masses.
Inserting the spectrum of $\omega_\MM^2$ obtained in \eqref{eq:spectrum_omega2} and switching the order in which the integrals are taken (for $\varepsilon > 0$), we can write
\begin{equation}
  \label{eq:P0_with_A}
  P_0(\tau, \kk) =
  \lim_{\varepsilon \to 0^+} \frac{m^4}{16 \uppi^2}
  \int_{k}^\infty
  \frac{1}{k^4}\, \abs*{A\left(\tau, k/\sqrt{3}, p\right)}^2\, \e^{-\varepsilon p}\, \dif p,
\end{equation}
where we have introduced the auxiliary function
\begin{equation}
  \label{eq:defA}
  A(\tau, \kappa, p) \defn \int_{-\infty}^\tau
  \frac{\kappa\, \tau^2}{\tau_1^2}
  \sin\big( \kappa\, (\tau - \tau_1) \big)\,
  \e^{-\im p \tau_1}\,
  \dif\tau_1,
\end{equation}
which can also be written in closed form in terms of the generalized exponential integral\footnote{For a definition and various properties of these special functions see \eg \cite[Chap. 8]{olver:2010}.} $E_2$ as
\begin{equation*}
  \label{eq:A_with_E2}
  A(\tau, \kappa, p) =
  A(\kappa\, \tau, p\, \tau) =
  \frac{\im}{2}\, \kappa\, \tau\,
  \Big(
    E_2 \big( \im\, (p + \kappa)\, \tau \big)\, \e^{ \im \kappa \tau} -
    E_2 \big( \im\, (p - \kappa)\, \tau \big)\, \e^{-\im \kappa \tau}
  \Big)
\end{equation*}
for $p \geq \kappa > 0$ and by the complex conjugate of this expression if $\kappa > p, \kappa > 0$.
In the following study of the form of the power spectrum $P_0$ the auxiliary function $A$ will be instrumental.

\begin{lemma}
  \label{lem:A}
  For $\abs{p} \neq \kappa > 0$, $A(\tau, \kappa, p)$ has the $\tau$-uniform bound
  \begin{equation}
    \label{eq:bound_A}
    \abs{A} \leq \frac{4\, \kappa^2}{\abs{\kappa^2 - p^2}}.
  \end{equation}
  For large negative times it satisfies the limit
  \begin{equation}
    \label{eq:limit_A}
    \lim_{\tau \to -\infty} \abs{A} = \frac{\kappa^2}{\abs{\kappa^2 - p^2}}.
  \end{equation}
\end{lemma}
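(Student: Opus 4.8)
The plan is to work throughout from the closed form $A = \tfrac{\im}{2}\kappa\tau\big(\e^{\im\kappa\tau}E_2(z_+) - \e^{-\im\kappa\tau}E_2(z_-)\big)$ (its complex conjugate when $\kappa>p$), with $z_\pm\defn\im(p\pm\kappa)\tau$, together with two elementary estimates at the purely imaginary arguments $z_\pm$: since $\abs{\e^{-z_\pm t}}=1$ one has $\abs{E_n(z_\pm)}\le\int_1^\infty t^{-n}\,\dif t=1/(n-1)$, and one integration by parts in the defining integral of $E_n$ gives the recurrence $E_n(z)=\e^{-z}/z-(n/z)E_{n+1}(z)$, whence $\abs{E_n(z_\pm)}\le 2/\abs{z_\pm}$; in all, $\abs{E_n(z_\pm)}\le\min\!\big(\tfrac{1}{n-1},\tfrac{2}{\abs{z_\pm}}\big)$. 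Using $A(\tau,\kappa,-p)=\conj{A(\tau,\kappa,p)}$ I may restrict to $p\ge 0$.

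For the limit \eqref{eq:limit_A} I would substitute $E_2(z_\pm)=\e^{-z_\pm}/z_\pm-(2/z_\pm)E_3(z_\pm)$ into the closed form. Because $\e^{\pm\im\kappa\tau}\e^{-z_\pm}=\e^{-\im p\tau}$ and $\tfrac{1}{p+\kappa}-\tfrac{1}{p-\kappa}=\tfrac{-2\kappa}{p^2-\kappa^2}$, the $\e^{-z_\pm}/z_\pm$ parts telescope to $-\kappa^2\e^{-\im p\tau}/(p^2-\kappa^2)$, of modulus exactly $\kappa^2/\abs{p^2-\kappa^2}$, and the leftover $R\defn-\kappa\big(\e^{\im\kappa\tau}E_3(z_+)/(p+\kappa)-\e^{-\im\kappa\tau}E_3(z_-)/(p-\kappa)\big)$ satisfies
\begin{equation*}
  \abs{R}\le\kappa\Big(\tfrac{\abs{E_3(z_+)}}{\abs{p+\kappa}}+\tfrac{\abs{E_3(z_-)}}{\abs{p-\kappa}}\Big)\le\tfrac{2\kappa}{\abs{\tau}}\Big(\tfrac{1}{(p+\kappa)^2}+\tfrac{1}{(p-\kappa)^2}\Big)\ \longrightarrow\ 0,\qquad \tau\to-\infty,
\end{equation*}
since $\abs{z_\pm}=\abs{p\pm\kappa}\,\abs{\tau}\to\infty$. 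Hence $A\to-\kappa^2\e^{-\im p\tau}/(p^2-\kappa^2)$ and $\abs{A}\to\kappa^2/\abs{p^2-\kappa^2}$.

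For \eqref{eq:bound_A}, the case $p<\kappa$ falls out of the same decomposition using only $\abs{E_3}\le\tfrac12$: then $\abs{R}\le\tfrac{\kappa}{2}\big(\tfrac{1}{p+\kappa}+\tfrac{1}{\kappa-p}\big)=\tfrac{\kappa^2}{\kappa^2-p^2}$, so $\abs{A}\le\tfrac{2\kappa^2}{\kappa^2-p^2}$. For $p>\kappa$ I would patch together estimates valid on complementary parts of $(-\infty,0)$: (a) the trivial $\abs{A}\le\kappa\tau^2\!\int_{-\infty}^{\tau}\!\tau_1^{-2}\,\dif\tau_1=\kappa\abs{\tau}$, good while $\kappa\abs{\tau}\le\tfrac{4\kappa^2}{p^2-\kappa^2}$; (b) iterating the recurrence once more ($E_3=\e^{-z}/z-(3/z)E_4$), whose explicit terms again telescope and whose $E_4$-remainder is $O(\abs{\tau}^{-1})$, yielding $\abs{A}\le\tfrac{\kappa^2}{p^2-\kappa^2}+\tfrac{2\kappa}{\abs{\tau}(p-\kappa)^2}$, good once $\abs{\tau}$ is large; and (c) an estimate for the intermediate window, where what matters is the cancellation between the two $E_2$-terms and not their individual sizes — concretely, I would return to the defining integral and integrate by parts against the fast phase $\e^{-\im p\tau_1}$ as often as needed, using that there $\kappa\abs{\tau}$ is forced to be small so that $\sin(\kappa(\tau-\tau_1))$ is nearly linear. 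Steps (a)--(b) already suffice unless $p$ is much larger than $\kappa$, in which case (c) supplies the missing $\kappa^2/p^2$-type decay.

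The real obstacle is step (c): a term-by-term bound $\abs{E_2(z_+)}+\abs{E_2(z_-)}$ overshoots $\kappa^2/\abs{p^2-\kappa^2}$ by a factor $\sim p/\kappa$ in the intermediate-$\tau$, large-$p/\kappa$ regime, so one cannot avoid exploiting the oscillation there. Once (a), (b) and (c) are available, choosing the thresholds so the three regimes overlap is routine, and the constant $4$ in \eqref{eq:bound_A} is exactly the room that makes the patching close.
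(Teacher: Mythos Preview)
Your limit argument is correct: the $E_n$ recurrence isolates the same leading term $-\kappa^2\e^{-\im p\tau}/(p^2-\kappa^2)$ that the paper finds, and your $O(1/\abs{\tau})$ bound on the $E_3$-remainder does the work of the paper's Riemann--Lebesgue step. The bound for $0\le p<\kappa$ is also fine.

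The gap is step~(c) for $p>\kappa$. You correctly diagnose that termwise $E_n$-bounds overshoot by a factor $\sim p/\kappa$ and that the $\tau$-ranges where (a) and (b) apply fail to overlap once $p/\kappa$ is large. But (c) is only a heuristic: you have not controlled the error in linearizing $\sin\!\big(\kappa(\tau-\tau_1)\big)$ over the whole range $\tau_1\in(-\infty,\tau)$, nor shown that the three patches meet with the stated constant~$4$. As written, the bound \eqref{eq:bound_A} is not established.

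The paper bypasses the patching entirely by working with the defining integral \eqref{eq:defA} rather than the closed $E_2$-form. Writing
\begin{equation*}
  \e^{-\im p\tau_1}=\frac{1}{\kappa^2-p^2}\Bigl(\od[2]{}{\tau_1}+\kappa^2\Bigr)\e^{-\im p\tau_1}
\end{equation*}
and integrating by parts twice, the operator $\partial_{\tau_1}^2+\kappa^2$ annihilates $\sin\!\big(\kappa(\tau-\tau_1)\big)$, so the only surviving bulk terms come from derivatives landing on $\tau_1^{-2}$:
\begin{equation*}
  A=\frac{\kappa^2}{\kappa^2-p^2}\bigl(\e^{-\im p\tau}+R\bigr),\qquad
  R=\tau^2\!\int_{-\infty}^{\tau}\!\Bigl(\frac{4}{\tau_1^3}\cos\!\big(\kappa(\tau-\tau_1)\big)+\frac{6}{\kappa\,\tau_1^4}\sin\!\big(\kappa(\tau-\tau_1)\big)\Bigr)\e^{-\im p\tau_1}\,\dif\tau_1.
\end{equation*}
The crude estimates $\abs{\cos}\le1$ and $\abs{\sin x}\le\abs{x}$ then give $\abs{R}\le 2+1=3$ uniformly in $\tau$ and in $p$, so $\abs{A}\le 4\kappa^2/\abs{\kappa^2-p^2}$ in one stroke for all $\abs{p}\neq\kappa$. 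The cancellation you were trying to capture in~(c) is exactly the annihilation of $\sin$ by $\partial_{\tau_1}^2+\kappa^2$; encoding it this way makes the factor $\kappa^2/(\kappa^2-p^2)$ fall out directly instead of having to be reassembled from pieces. The same decomposition also delivers the limit, via Riemann--Lebesgue applied to $R$.
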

\begin{proof}
  Using the fact that
  \begin{equation*}
    \e^{-\im p \tau_1} = \left( \od[2]{}{\tau_1} + \kappa^2 \right) \frac{\e^{-\im p \tau_1}}{\kappa^2 - p^2},
  \end{equation*}
  we can perform two integrations by parts to obtain
  \begin{align*}
    A(\tau, \kappa, p) & = \frac{\kappa^2}{\kappa^2 - p^2} \big( \e^{-\im p \tau} + R(\tau, \kappa, p) \big),
    \quad \text{with}
    \\
    R(\tau, \kappa, p) & \defn \tau^2 \int_{-\infty}^\tau
    \left(
      \frac{4}{\tau_1^3} \cos\big(\kappa\, (\tau - \tau_1)\big)
      +
      \frac{6}{\kappa\, \tau_1^4} \sin\big(\kappa\, (\tau - \tau_1)\big)
    \right)\, \e^{-\im p \tau_1}\, \dif\tau_1.
  \end{align*}
  It is now easy to obtain an upper bound for $R$ which is uniform in conformal time, namely $\abs{R} \leq 3$, which then yields the bound \eqref{eq:bound_A}.

  For the second part of the proposition we perform a change of the integration variable to $x = \tau_1 / \tau$:
  \begin{equation*}
    R(\tau, \kappa, p) = - \int_1^{\infty}
    \left(
      \frac{4}{x^3} \cos\big(\kappa\, \tau\, (1 - x)\big)
      +
      \frac{6}{\kappa\, \tau\, x^4} \sin\big(\kappa\, \tau\, (1 - x)\big)
    \right)\, \e^{-\im p \tau x}\, \dif x.
  \end{equation*}
  The contribution proportional to $1/\tau$ in $R$ is bounded by $C(\kappa)/\abs{\tau}$ and thus vanishes in the limit $\tau \to -\infty$.
  Moreover, since $\abs{p} \neq \kappa$ and ${1}/{x^3}$ is $L^1$ on $[1,\infty)$, we can apply the Riemann--Lebesgue lemma and see that this contribution vanishes in the limit $\tau \to -\infty$.
  The remaining part of $\abs{A}$ is $\kappa^2 \abs{\kappa^2 - p^2}^{-1}$, which is independent of $\tau$, and thus the limit \eqref{eq:limit_A} holds true.
\end{proof}

Note that the bound for $A$ obtained above is not optimal.
Numerical integration indicates that $\abs{A}^2$ is monotonically decreasing in $\tau$ and thus bounded by the limit stated in \eqref{eq:limit_A} (see also Fig. \ref{fig:P0_plot}).
Nevertheless, we can use this lemma to derive the following bounds and limits for $P_0$:

\begin{proposition}
  \label{prop:bound_large_limit}
  The leading contribution $P_0$ to the power spectrum of the potential $\Psi$ induced by a conformally coupled massive scalar field in the Bunch--Davis state is bounded by the Harrison--Zel'dovich spectrum uniformly in time, namely
  \begin{equation*}
    \abs[\big]{P_0(\tau, \kk)} \leq \frac{16\, C}{\abs{\kk}^3},
    \quad
    C \defn \frac{3 - 2 \sqrt{3}\, \arccoth\!\sqrt{3}}{192 \uppi^2}\, m^4,
  \end{equation*}
  and it tends to the Harrison--Zel'dovich spectrum for $\tau \to -\infty$, \ie,
  \begin{equation*}
    \label{eq:limit_P0}
    \lim_{\tau \to -\infty} P_0(\tau, \kk) = \frac{C}{\abs{\kk}^3}.
  \end{equation*}
\end{proposition}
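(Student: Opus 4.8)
The plan is to read both claims directly off the closed form \eqref{eq:P0_with_A} for $P_0$, using only the $\tau$-uniform bound \eqref{eq:bound_A} and the limit \eqref{eq:limit_A} for $\abs{A}$ supplied by Lemma \ref{lem:A}, supplemented by two applications of dominated convergence and one elementary integral.

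First I would insert \eqref{eq:bound_A}, evaluated at $\kappa = k/\sqrt3$ (recall $k = \abs{\kk}$), into \eqref{eq:P0_with_A}. Since the $p$-integration runs over $[k,\infty)$ and $k/\sqrt3 < k$, the hypothesis $\abs{p} \neq \kappa$ of Lemma \ref{lem:A} holds on the whole integration range, where
\[
  \frac{1}{k^4}\,\abs[\big]{A(\tau, k/\sqrt3, p)}^2 \;\leq\; \frac{16/9}{(p^2 - k^2/3)^2} \;=:\; g(p) .
\]
The function $g$ is continuous on $[k,\infty)$, decays like $p^{-4}$, hence lies in $L^1([k,\infty))$, and is independent of both $\tau$ and $\varepsilon$; combined with $\e^{-\varepsilon p} \leq 1$ this lets me take the limit $\varepsilon \to 0^+$ under the integral in \eqref{eq:P0_with_A} and at the same time estimate
\[
  \abs[\big]{P_0(\tau,\kk)} \;\leq\; \frac{m^4}{32\uppi^2} \int_k^\infty g(p)\, \dif p \;=\; \frac{m^4}{18\uppi^2} \int_k^\infty \frac{\dif p}{(p^2 - k^2/3)^2}
\]
uniformly in $\tau$.

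Next I would evaluate the remaining integral by partial fractions: with $a \defn k/\sqrt3$ one has, for $p > a$,
\[
  \int \frac{\dif p}{(p^2 - a^2)^2} \;=\; \frac{1}{4a^3}\ln\frac{p+a}{p-a} \;-\; \frac{p}{2a^2(p^2 - a^2)} ,
\]
which vanishes as $p \to \infty$, while at $p = k$ one finds $\tfrac{k+a}{k-a} = \tfrac{\sqrt3+1}{\sqrt3-1} = 2 + \sqrt3$ and, using $\arccoth\sqrt3 = \tfrac12 \ln(2 + \sqrt3)$,
\[
  \int_k^\infty \frac{\dif p}{(p^2 - k^2/3)^2} \;=\; \frac{3}{4k^3}\bigl( 3 - 2\sqrt3\, \arccoth\sqrt3 \bigr) .
\]
Substituting this into the previous display and collecting the constants gives exactly $\abs{P_0(\tau,\kk)} \leq 16\, C\, \abs{\kk}^{-3}$; since $3 - 2\sqrt3\,\arccoth\sqrt3 > 0$ we also get $C > 0$, which settles the first assertion.

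Finally, for the behaviour as $\tau \to -\infty$ I would apply dominated convergence once more, with the same majorant $g$ and now the pointwise limit \eqref{eq:limit_A}, which at $\kappa = k/\sqrt3$ reads $\abs{A(\tau, k/\sqrt3, p)}^2 \to (k^2/3)^2 (p^2 - k^2/3)^{-2}$. Passing the limit through the $p$-integral (after the $\varepsilon \to 0^+$ limit already performed) leaves
\[
  \lim_{\tau \to -\infty} P_0(\tau, \kk) \;=\; \frac{m^4}{288\uppi^2} \int_k^\infty \frac{\dif p}{(p^2 - k^2/3)^2} \;=\; \frac{C}{\abs{\kk}^3} ,
\]
by the integral just computed. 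I do not expect a genuine obstacle in this argument; the only points requiring a little care are that the integration range $[k,\infty)$ stays bounded away from the would-be singularity at $p = k/\sqrt3$ — so that Lemma \ref{lem:A} applies and $g$ is integrable — and that the order of the limits $\varepsilon \to 0^+$ and $\tau \to -\infty$ is immaterial, since $g$ does not depend on $\varepsilon$; both are taken care of by the single $L^1$ dominating function $g$.
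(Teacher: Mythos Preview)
Your proof is correct and follows essentially the same approach as the paper's: both plug the $\tau$-uniform bound \eqref{eq:bound_A} and the pointwise limit \eqref{eq:limit_A} from Lemma~\ref{lem:A} into \eqref{eq:P0_with_A} and evaluate the resulting elementary integral $\int_k^\infty (3p^2 - k^2)^{-2}\,\dif p$. You are simply more explicit than the paper in spelling out the dominated-convergence justification for the two limit interchanges and the partial-fractions evaluation of the integral.
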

\begin{proof}
  The proof can be easily obtained using the $\tau$-uniform estimate \eqref{eq:bound_A} obtained in lemma \ref{lem:A} and computing the integral
  \begin{equation*}
    \abs[\big]{P_0(\tau, \kk)}
    \leq \frac{m^4}{\uppi^2} \int_k^\infty \left(\frac{1}{3 p^2 - k^2}\right)^2\, \dif p
    = \frac{3 - 2 \sqrt{3}\, \arccoth\!\sqrt{3}}{12 \uppi^2} \frac{m^4}{k^3}.
  \end{equation*}

  Having shown the first part of the proposition, let us now analyze the limit
  \begin{equation*}
    \lim_{\tau \to -\infty} P_0(\tau, \kk) = \frac{m^4}{16 \uppi^2} \int_k^\infty \frac{1}{k^4}
    \lim_{\tau \to -\infty} \abs*{A\left(\tau, k/\sqrt{3}, p\right)}^2\, \dif p,
  \end{equation*}
  where we have taken the $\tau$-limit before the integral and already evaluated the $\varepsilon$-limit because $\abs{A}^2$ is bounded by an integrable function uniformly in time.
  Inserting the limit \eqref{eq:limit_A} from lemma \ref{lem:A}, we can compute the $p$-integral
  \begin{equation*}
    \lim_{\tau \to -\infty} P_0(\tau, \kk)
    = \frac{m^4}{16 \uppi^2} \int_k^\infty (3 p^2 - k^2)^{-2}\, \dif p
    = \frac{3 - 2 \sqrt{3}\, \arccoth\!\sqrt{3}}{192 \uppi^2} \frac{m^4}{k^3},
  \end{equation*}
  thus concluding the proof.
\end{proof}

We can complement the results of proposition \ref{prop:bound_large_limit} with the following observation:
\begin{proposition}
  \label{prop:P0_k3}
  The power spectrum $P_0$ has the form
  \begin{equation*}
    P_0(\tau, \kk) = \frac{\mathcal{P}_0(\abs{\kk} \tau)}{\abs{\kk}^3},
  \end{equation*}
  where $\mathcal{P}_0$ is a function of $\abs{\kk} \tau$ only.
\end{proposition}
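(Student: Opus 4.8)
The plan is to exploit a scaling (homogeneity) property of the auxiliary function $A$ introduced in \eqref{eq:defA}. For any $\lambda > 0$ the change of variables $\tau_1 \mapsto \tau_1/\lambda$ in \eqref{eq:defA}, which sends the integration range $(-\infty,\tau]$ to $(-\infty,\lambda\tau]$, converts the integrand---a product of the scale-covariant factors $\kappa\,\tau^2/\tau_1^2$, $\sin(\kappa(\tau-\tau_1))$ and $\e^{-\im p\tau_1}$---into the corresponding expression with $\tau,\kappa,p$ replaced by $\lambda\tau,\kappa/\lambda,p/\lambda$, so that
\[
  A(\tau, \kappa, p) = A\left( \lambda\tau,\ \tfrac{\kappa}{\lambda},\ \tfrac{p}{\lambda} \right) .
\]
Taking $\lambda = \abs{\kk}$ in the occurrence of $A$ in \eqref{eq:P0_with_A}, where $\kappa = \abs{\kk}/\sqrt{3}$, this reads
\[
  A\left( \tau,\ \tfrac{\abs{\kk}}{\sqrt{3}},\ p \right) = A\left( \abs{\kk}\tau,\ \tfrac{1}{\sqrt{3}},\ \tfrac{p}{\abs{\kk}} \right) ,
\]
so the dependence of $A$ on $\tau$ and $\abs{\kk}$ is carried entirely by the combination $\abs{\kk}\tau$, the remaining dependence being on the ratio $p/\abs{\kk}$.

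I would then insert this into \eqref{eq:P0_with_A} and rescale the integration variable via $p = \abs{\kk}\,q$: the lower endpoint becomes $q=1$, the measure $\dif p$ contributes one power of $\abs{\kk}$, the explicit prefactor $k^{-4}$ contributes $\abs{\kk}^{-4}$, and the regulator turns into $\e^{-\varepsilon\abs{\kk} q}$. The result is
\[
  P_0(\tau,\kk) = \lim_{\varepsilon\to 0^+} \frac{m^4}{32\uppi^2\,\abs{\kk}^3} \int_1^\infty \abs[\Big]{A\left( \abs{\kk}\tau,\ \tfrac{1}{\sqrt{3}},\ q \right)}^2 \e^{-\varepsilon\abs{\kk} q}\, \dif q ,
\]
and once the $\varepsilon$-limit is taken the surviving integral depends on $\tau$ and $\kk$ only through $\abs{\kk}\tau$. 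Defining $\mathcal{P}_0(x) \defn \frac{m^4}{32\uppi^2}\int_1^\infty \abs{A(x,\,1/\sqrt{3},\,q)}^2\,\dif q$ then gives $P_0(\tau,\kk) = \mathcal{P}_0(\abs{\kk}\tau)/\abs{\kk}^3$, which is the claim.

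The only delicate point---and it is a routine one rather than a genuine obstacle---is moving the limit $\varepsilon\to 0^+$ inside the $q$-integral in the displayed formula. This follows from dominated convergence: Lemma~\ref{lem:A}, applied with $\kappa = 1/\sqrt{3}$ and $q \geq 1 > 1/\sqrt{3}$ (so that we stay away from the point $q = \kappa$ where the bound degenerates), gives the $\tau$-independent estimate $\abs{A(\abs{\kk}\tau,1/\sqrt{3},q)}^2 \leq \frac{16}{9}\,(q^2-\tfrac13)^{-2}$, which is integrable on $[1,\infty)$; hence the integrand is dominated uniformly in $\varepsilon>0$ and $\tau<0$, and the limit may be exchanged with the integral. (The closed form of $A$ in terms of $E_2$ additionally shows $\mathcal{P}_0$ is smooth on $(-\infty,0)$, while Proposition~\ref{prop:bound_large_limit} already records its uniform bound and its limiting value as $\tau\to-\infty$.)
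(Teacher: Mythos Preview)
Your proof is correct and follows essentially the same route as the paper: the paper simply notes that $A(\tau,\kappa,p)$ depends only on $\tau\kappa$ and $\tau p$ (your homogeneity relation rephrased), takes the $\varepsilon$-limit inside the integral, and then substitutes $x=p\,\tau$ in \eqref{eq:P0_with_A}, whereas you substitute $q=p/\abs{\kk}$---a cosmetically different choice of scaling variable leading to the same conclusion. Your dominated-convergence justification for exchanging the $\varepsilon$-limit with the integral is a welcome addition that the paper leaves implicit.
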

\begin{proof}
  Noting that $A(\tau, \kappa, p)$ is a function of $\kappa\, \tau$ and $p\, \tau$ only and performing the $\varepsilon$-limit in \eqref{eq:P0_with_A} inside the integral, this can be seen by the substitution $x = p\, \tau$ in \eqref{eq:P0_with_A}.
\end{proof}

We would like to improve the estimate of $P_0(\tau, \kk)$ for $\tau$ close to zero.
Adhering to our previous strategy, we shall first give a new estimate for $A(\tau, k, p)$:

\begin{lemma}
  The auxiliary function $A(\tau, \kappa, p)$ is bounded by
  \begin{equation*}
    \abs[\big]{A(\tau, \kappa, p)} \leq - 2\, \frac{\kappa^2\, \tau}{\abs{p}}, \quad p \neq 0,\quad \tau<0.
  \end{equation*}
\end{lemma}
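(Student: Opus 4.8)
The plan is to bound $\abs{A}$ directly from the integral representation \eqref{eq:defA} rather than from the integration-by-parts form used in Lemma~\ref{lem:A}, since the latter is sharp near $\tau \to -\infty$ but weak near $\tau \to 0^-$. Starting from
\begin{equation*}
  A(\tau, \kappa, p) = \int_{-\infty}^\tau \frac{\kappa\, \tau^2}{\tau_1^2}\, \sin\big(\kappa\,(\tau - \tau_1)\big)\, \e^{-\im p \tau_1}\, \dif\tau_1,
\end{equation*}
I would first do a single integration by parts that moves the oscillating factor $\e^{-\im p\tau_1}$ onto the rest of the integrand, using $\e^{-\im p\tau_1} = \frac{\im}{p}\,\od{}{\tau_1}\e^{-\im p\tau_1}$ for $p \neq 0$. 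The boundary term at $\tau_1 = \tau$ contributes nothing because $\sin\big(\kappa(\tau-\tau_1)\big)$ vanishes there, and the contribution at $\tau_1 \to -\infty$ vanishes because of the $\tau_1^{-2}$ decay. What remains is
\begin{equation*}
  A(\tau, \kappa, p) = \frac{\im}{p}\int_{-\infty}^\tau \left( -\frac{2\kappa\,\tau^2}{\tau_1^3}\sin\big(\kappa(\tau-\tau_1)\big) - \frac{\kappa^2\,\tau^2}{\tau_1^2}\cos\big(\kappa(\tau-\tau_1)\big) \right)\e^{-\im p\tau_1}\,\dif\tau_1.
\end{equation*}

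Next I would estimate the absolute value by pulling out $1/\abs{p}$, bounding $\abs{\sin}$ and $\abs{\cos}$ by $1$ and $\abs{\e^{-\im p\tau_1}} = 1$, and then computing the two elementary integrals $\int_{-\infty}^\tau \abs{\tau_1}^{-3}\,\dif\tau_1 = \tfrac{1}{2}\tau^{-2}$ and $\int_{-\infty}^\tau \abs{\tau_1}^{-2}\,\dif\tau_1 = \abs{\tau}^{-1} = -\tau^{-1}$ (valid since $\tau < 0$). This yields
\begin{equation*}
  \abs{A} \leq \frac{1}{\abs{p}}\left( 2\kappa\,\tau^2 \cdot \frac{1}{2\tau^2} + \kappa^2\,\tau^2 \cdot \frac{-1}{\tau} \right) = \frac{\kappa}{\abs{p}} + \frac{-\tau\,\kappa^2}{\abs{p}} = \frac{\kappa - \tau\,\kappa^2}{\abs{p}}.
\end{equation*}
For this to be dominated by $-2\tau\kappa^2/\abs{p}$ I need $\kappa - \tau\kappa^2 \leq -2\tau\kappa^2$, i.e.\ $\kappa \leq -\tau\kappa^2$, i.e.\ $-\tau\kappa \geq 1$; so this crude bound is only good in the regime $\abs{\tau}\kappa \geq 1$, and I expect the stated bound is actually intended for that regime (it is the complement of where \eqref{eq:bound_A} is used effectively). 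In the remaining regime $\abs{\tau}\kappa < 1$ one can instead bound $\abs{\sin(\kappa(\tau-\tau_1))}$ more carefully or split the $\tau_1$-integral at $\tau_1 = 2\tau$, using $\abs{\sin\theta}\leq\abs{\theta}$ on the near part and the $\abs{\tau_1}^{-2}$ decay on the far part; I would need to check the constants work out to $2$.

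The main obstacle I anticipate is precisely tracking the constant and confirming the bound holds for all $\tau < 0$ rather than only in the large-$\abs{\tau}\kappa$ regime — the integration-by-parts boundary bookkeeping is routine, but getting the clean constant $2$ uniformly may require the split-integral refinement sketched above, or a slightly cleverer single estimate (for instance keeping $\sin\big(\kappa(\tau-\tau_1)\big)/(\tau-\tau_1)$ together as a bounded-by-$\kappa$ quantity and integrating $\abs{\tau-\tau_1}/\abs{\tau_1}^{3}$). I would first verify the calculation numerically against the plot referenced in Fig.~\ref{fig:P0_plot} to make sure the constant $2$ is correct before committing to which refinement is needed.
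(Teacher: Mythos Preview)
Your approach is the paper's approach: one integration by parts transferring the derivative from $\e^{-\im p\tau_1}$ onto the rest of the integrand, with vanishing boundary terms. The only difference is in how you estimate the resulting integral. You bound $\abs{\sin}\leq 1$ on the sine term and then discover the constant is wrong unless $\abs{\tau}\kappa\geq 1$. The paper instead uses $\abs{\sin\theta}\leq\abs{\theta}$ on that term \emph{globally} --- precisely the ``keep $\sin\big(\kappa(\tau-\tau_1)\big)/(\tau-\tau_1)$ bounded by $\kappa$'' refinement you list at the end --- and combines it with $\abs{\cos}\leq 1$ on the other term. No case split is needed: the integrand is then bounded by
\begin{equation*}
  \frac{\kappa^2\tau^2}{\abs{p}}\left(\frac{1}{\tau_1^2} - \frac{2(\tau-\tau_1)}{\tau_1^3}\right),
\end{equation*}
and the $\tau_1$-integral of this over $(-\infty,\tau)$ evaluates exactly to $-2/\tau$, giving the constant $2$ on the nose for every $\tau<0$.

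So there is no genuine gap in your strategy; you simply stopped one step short. Drop the crude $\abs{\sin}\leq 1$ estimate, apply $\abs{\sin(\kappa(\tau-\tau_1))}\leq\kappa(\tau-\tau_1)$ uniformly, and compute the resulting elementary integral --- the answer is $-2\tau\kappa^2/\abs{p}$ directly, with no need for the regime split or the numerical check you anticipated.
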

\begin{proof}
  Recalling the form of $A$ given in \eqref{eq:defA} and integrating by parts, where we use that $\e^{-\im p \tau_1} = \im\, p^{-1}\, \partial_{\tau_1} \e^{-\im p \tau_1}$, we find
  \begin{equation*}
    A(\tau, \kappa, p) = \frac{\im\, \kappa^2 \tau^2}{p} \int_{-\infty}^\tau
    \left(
      \frac{1}{\tau_1^2} \cos\big(\kappa\, (\tau - \tau_1)\big) +
      \frac{2}{\kappa\, \tau_1^3} \sin\big(\kappa\, (\tau - \tau_1)\big)
    \right)\, \e^{-\im p \tau_1}\, \dif\tau_1.
  \end{equation*}
  We then take the absolute value and estimate the trigonometric functions, which gives us a bound on $A$, namely
  \begin{equation*}
    \abs[\big]{A(\tau, \kappa, p)}
    \leq \frac{\kappa^2 \tau^2}{\abs{p}} \int_{-\infty}^\tau
    \left( \frac{1}{\tau_1^2} - 2\, \frac{\tau - \tau_1}{\tau_1^3} \right)\, \dif\tau_1
    = - 2\, \frac{\kappa^2\, \tau}{\abs{p}}. \qedhere
  \end{equation*}
\end{proof}

Performing the integration in $p$ analogously to the second part of proposition \eqref{prop:bound_large_limit}, the last lemma immediately leads to a corresponding bound for $P_0$:

\begin{proposition}
  \label{prop:small_limit}
  The leading contribution $P_0$ of the power spectrum of the potential $\Psi$ satisfies the inequality
  \begin{equation*}
    \abs[\big]{P_0(\tau, \kk)} \leq \frac{m^4}{36 \uppi^2} \frac{\tau^2}{\abs{\kk}}
  \end{equation*}
  and therefore, in particular, $P_0(0, \kk) = 0$.
\end{proposition}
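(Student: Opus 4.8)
The plan is to chain the new bound on the auxiliary function $A$ with the representation \eqref{eq:P0_with_A} of $P_0$, in exactly the way the second half of Proposition~\ref{prop:bound_large_limit} chained the limit \eqref{eq:limit_A} with the same representation. Concretely, I start from
\begin{equation*}
  \abs[\big]{P_0(\tau, \kk)} \leq \frac{m^4}{32 \uppi^2} \int_k^\infty \frac{1}{k^4}\, \abs[\Bigg]{A\left(\tau, \frac{k}{\sqrt 3}, p\right)}^2\, \dif p,
\end{equation*}
where the $\varepsilon$-limit has already been taken inside the integral because the integrand is dominated by an integrable function (the new bound shows $\abs{A}^2 \leq 4 \tau^2 \kappa^4 / p^2$ is $L^1$ on $[k,\infty)$, uniformly in $\tau$ on compacta, and the earlier bound \eqref{eq:bound_A} handles the region near $p = \kappa$ which the new bound does not). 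Then I insert the lemma with $\kappa = k/\sqrt 3$, so that $\abs{A}^2 \leq 4\,\tau^2\,(k/\sqrt3)^4 / p^2 = \tfrac{4}{9}\,\tau^2 k^4 / p^2$, and the $k^{-4}$ cancels.

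The remaining computation is the elementary integral $\int_k^\infty p^{-2}\,\dif p = 1/k$, so
\begin{equation*}
  \abs[\big]{P_0(\tau, \kk)} \leq \frac{m^4}{32 \uppi^2} \cdot \frac{4}{9}\, \tau^2 \int_k^\infty \frac{\dif p}{p^2} = \frac{m^4}{72 \uppi^2} \frac{\tau^2}{k},
\end{equation*}
which is the asserted inequality with $\abs{\kk} = k$. Setting $\tau = 0$ gives $P_0(0, \kk) = 0$ directly; alternatively this also follows from Proposition~\ref{prop:P0_k3}, since $\mathcal{P}_0(\abs{\kk}\tau)$ evaluated at argument $0$ must vanish for the product $\mathcal{P}_0(\abs{\kk}\tau)/\abs{\kk}^3$ to stay bounded, as the inequality just proved forces.

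I do not anticipate a genuine obstacle here: the structural work was already done in the preceding lemma (the new $\abs{A}$ bound), and everything downstream is the same pattern used twice before in Proposition~\ref{prop:bound_large_limit}. The one point requiring a word of care is the interchange of limits and integral — passing the $\varepsilon \to 0^+$ limit inside and justifying that $P_0(\tau,\kk)$ is continuous in $\tau$ down to $\tau = 0$ — but this is covered by dominated convergence using the combination of the two available bounds on $\abs{A}$ (the $\tau$-uniform bound \eqref{eq:bound_A} near $p = \kappa$, and the new $p^{-2}$ decay for large $p$), so I would state it in one sentence rather than belabour it.
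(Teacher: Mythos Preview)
Your proposal is correct and follows exactly the approach the paper takes: insert the lemma's bound $\abs{A(\tau,\kappa,p)} \leq -2\tau\kappa^2/\abs{p}$ into \eqref{eq:P0_with_A} and integrate $\int_k^\infty p^{-2}\,\dif p = 1/k$, precisely as in the second half of Proposition~\ref{prop:bound_large_limit}. One small remark: your parenthetical about needing the earlier bound \eqref{eq:bound_A} ``near $p=\kappa$'' is unnecessary, since the new lemma's bound holds for all $p \neq 0$ (in particular on all of $[k,\infty)$) and already furnishes an integrable dominator by itself.
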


The rescaled power spectrum $\mathcal{P}_0(\abs{\kk} \tau)$ can be analyzed numerically and a plot is shown in Fig. \ref{fig:P0_plot}.
It clearly exhibits the asymptotic behaviour of $P_0$ discussed in propositions \ref{prop:bound_large_limit} and \ref{prop:small_limit}.
Note that the horizontal axis is logarithmically scaled to highlight the behavior of $\mathcal{P}_0$ for small $\abs{\kk} \tau$, which would be concealed by the fast approach of $\mathcal{P}_0$ to its bound had we used a linear scaling.

\begin{figure}
  \centering
	\includegraphics{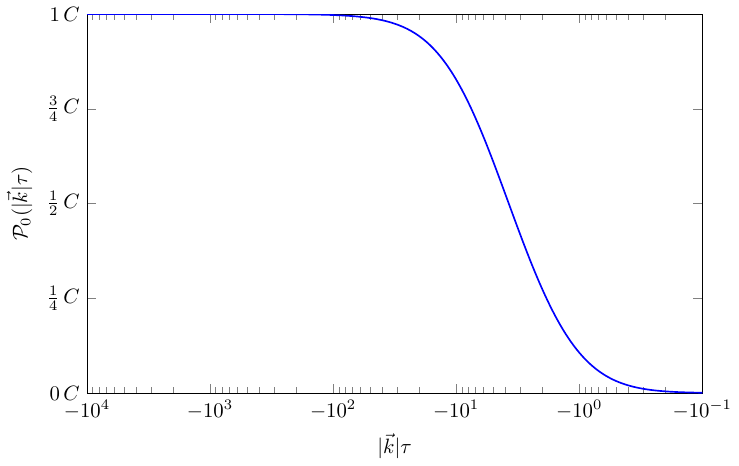}
  \caption{Logarithmic plot of the rescaled power spectrum $\mathcal{P}_0(\abs{\kk} \tau)$, where $C$ is the same proportionality constant as in proposition \ref{prop:bound_large_limit}.}
  \label{fig:P0_plot}
\end{figure}

In this section we have used the leading singularity of the two-point function of the Bunch--Davis state on a de Sitter universe to compute the influence of quantum matter on the power spectrum of the metric perturbation $\Psi$.
(Recall that considering only the leading singularity in the Bunch--Davis state also corresponds to the limit of vanishing mass.)
We have seen that this results in an almost scale-invariant power spectrum.
We stress that such a singularity is not a special feature of the Bunch--Davis state but is common for every Hadamard state.
Moreover, although our analysis has been done on a de Sitter universe, similar quantum states have been constructed on universes which are asymptotically de Sitter spaces in the past \cite{dappiaggi:2009a,dappiaggi:2009b}.
All these states tend to the Bunch--Davis state for $\tau \to -\infty$ and are of Hadamard form.


\subsection{Non-Gaussianities of the metric perturbations}

It follows from \eqref{eq:einstein_prob_trace_n} that the $n$-point correlation for $\Psi$ will, in general, not vanish.
Also for odd $n$ they will be different from zero and hence $\Psi$ is not a Gaussian random field.
As a first measure of the non-Gaussianity of $\Psi$ one usually calculates its three-point correlation function or the corresponding bispectrum $B$:
\begin{multline*}
  \E[\big]{\Psi(\tau, \xx_1)\, \Psi(\tau, \xx_2)\, \Psi(\tau, \xx_3)} \defn
  \frac{1}{(2 \uppi)^9} \iiint_{\RR^9}
  \delta(\kk_1 + \kk_2 + \kk_3)\, B(\tau, \kk_1, \kk_2, \kk_3)
  \\ \times
  \e^{\im\, (\kk_1 \cdot \xx_1 + \kk_2 \cdot \xx_2 + \kk_3 \cdot \xx_3)}\,
  \dif^3\!\kk_1\, \dif^3\!\kk_2\, \dif^3\!\kk_3.
\end{multline*}

Assuming non-zero $\kk_1$, $\kk_2$ and $\kk_3$, we will derive the form of the bispectrum $B$ considering (as above) only the contribution due to the leading singularity of the Bunch--Davis state, which we will denote by $B_0$.
We will follow the same steps that lead us to the calculation of the power spectrum in the previous section.
That is, we apply the retarded propagator $\Delta_R$ of \eqref{eq:hyp_op} as in \eqref{eq:retarded_applied_to_w} to the right-hand side of \eqref{eq:einstein_prob_trace_n} for $n=3$ to obtain an equation for $\Psi$ and insert for the two-point distribution the conformally rescaled two-point distribution of the massless Minkowski vacuum.
The result can again be expressed in terms of the auxiliary function $A$ defined in \eqref{eq:defA}:
\begin{multline}
  \label{eq:B0_with_A}
  B_0(\tau, \kk_1, \kk_2, \kk_3) = \lim_{\varepsilon \to 0^+}
  \frac{m^6}{32 \sqrt{3}\, k_1^2\, k_2^2\, k_3^2} \int_{\RR^3}
  \left(
    \frac{\e^{-\varepsilon\, (\omega_\pp(-\kk_1) + \omega_\pp(\kk_3) + \abs{\pp})}}{\omega_\pp(-\kk_1)\, \omega_\pp(\kk_3)\, \abs{\pp}}\,
    A\left(\tau, \frac{\abs{\kk_1}}{\sqrt{3}},   \omega_\pp(-\kk_1) + \abs{\pp} \right) \right.
    \\ \times \left.
    A\left(\tau, \frac{\abs{\kk_3}}{\sqrt{3}}, - \omega_\pp(\kk_3) - \abs{\pp} \right)\,
    A\left(\tau, \frac{\abs{\kk_2}}{\sqrt{3}},   \omega_\pp(\kk_3) - \omega_\pp(-\kk_1) \right)
    + \text{permutations}
  \right)\,
  \dif^3\!\pp,
\end{multline}
where $\omega_\pp(\kk) \defn \abs{\kk + \pp}$ and the sum is over all permutations of $1, 2, 3$.

We can apply the same bound on $A$ which has been used in the previous section to bound the power spectrum $P_0$ to produce a bound on the integrand of $B_0$ almost everywhere.\footnote{We cannot bound the integrand of $B_0$ everywhere using \eqref{eq:bound_A} because $\abs{\kk_2} / \sqrt{3} \neq \abs[\big]{\omega_\pp(\kk_3) - \omega_\pp(-\kk_1)}$ (and permutations) does not hold everywhere.}
Nevertheless, the singularity in the integrand in \eqref{eq:B0_with_A} is integrable, \ie, $B_0$ is bounded.
As a consequence we can perform the limit $\varepsilon \to 0^+$ inside the integral.

\begin{proposition}
  The leading contribution $B_0$ of the bispectrum of the metric perturbation $\Psi$ has the form
  \begin{equation*}
    B_0(\tau, \kk_1, \kk_2, \kk_3) = \frac{\mathcal{B}_0(k_1 \tau, k_2 \tau, k_3 \tau)}{k_1^2\, k_2^2\, k_3^2},
  \end{equation*}
  where $\mathcal{B}_0$ is a function of $k_1 \tau$, $k_2 \tau$, and $k_3 \tau$ only and $k_i = \abs{\kk_i}$.
\end{proposition}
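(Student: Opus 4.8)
The argument parallels the proof of Proposition~\ref{prop:P0_k3} and reduces to a scaling analysis of \eqref{eq:B0_with_A}. Two facts drive it. First, $A(\tau,\kappa,p)$ depends on $\tau,\kappa,p$ only through the products $\tau\kappa$ and $\tau p$: substituting $\tau_1 = \tau x$ in \eqref{eq:defA} gives
\begin{equation*}
  A(\tau,\kappa,p) = -(\tau\kappa)\int_1^\infty \frac{1}{x^2}\,\sin\big((\tau\kappa)(1-x)\big)\,\e^{-\im(\tau p)x}\,\dif x .
\end{equation*}
Second, the quantities $\abs{\kk_i}$, $\abs{\pp}$ and $\omega_\pp(\pm\kk) = \abs{\pp \pm \kk}$ are all positively homogeneous of degree one jointly in $(\kk_1,\kk_2,\kk_3,\pp)$. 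Before using either I would record, as already noted just above the proposition, that the integrand of \eqref{eq:B0_with_A} has only an integrable singularity, so $B_0$ is finite and the limit $\varepsilon\to 0^+$ may be carried inside the integral; the exponential regulator then drops out.

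In the resulting $\varepsilon$-free integral I would substitute $\pp = \vec{q}/(-\tau)$, which is legitimate since $\tau<0$. Then $\dif^3\pp = (-\tau)^{-3}\,\dif^3\vec{q}$, while $\abs{\pp} = (-\tau)^{-1}\abs{\vec{q}}$, $\omega_\pp(-\kk_1) = (-\tau)^{-1}\abs{\vec{q}+\tau\kk_1}$ and $\omega_\pp(\kk_3) = (-\tau)^{-1}\abs{\vec{q}-\tau\kk_3}$; hence the factor $\big(\omega_\pp(-\kk_1)\,\omega_\pp(\kk_3)\,\abs{\pp}\big)^{-1}$ scales as $(-\tau)^{3}$ and exactly cancels the Jacobian. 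Writing each $A$-factor in its $(\tau\kappa,\tau p)$ form via the identity above, every argument becomes a function of $\vec{q}$ and of the rescaled external momenta $\tau\kk_1,\tau\kk_2,\tau\kk_3$ alone — for instance $\tau\,\abs{\kk_i}/\sqrt{3}$ depends only on $\tau\kk_i$, and $\tau\big(\omega_\pp(-\kk_1)+\abs{\pp}\big) = -\big(\abs{\vec{q}+\tau\kk_1}+\abs{\vec{q}}\big)$. Since the permutation sum treats all three labels symmetrically, one obtains
\begin{equation*}
  B_0(\tau,\kk_1,\kk_2,\kk_3) = \frac{m^6}{32\sqrt{3}\,k_1^2 k_2^2 k_3^2}\int_{\RR^3} F\big(\vec{q};\,\tau\kk_1,\tau\kk_2,\tau\kk_3\big)\,\dif^3\vec{q},
\end{equation*}
with $F$ depending on the $\kk_i$ only through the $\tau\kk_i$. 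The prefactor is already the advertised $1/(k_1^2 k_2^2 k_3^2)$.

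It then remains to identify the dependence of the integral on the $\tau\kk_i$. Rotating $\vec{q}$ shows it is invariant under a simultaneous rotation of $(\tau\kk_1,\tau\kk_2,\tau\kk_3)$, hence a function of the scalar products $\tau^2\,\kk_i\cdot\kk_j$; on the support of the momentum-conserving delta distribution in the definition of $B$, where $\kk_1+\kk_2+\kk_3 = 0$, these are fixed by $k_1\tau,k_2\tau,k_3\tau$, yielding $B_0 = \mathcal{B}_0(k_1\tau,k_2\tau,k_3\tau)/(k_1^2k_2^2k_3^2)$. I expect no real difficulty; the one place demanding care is the bookkeeping in the rescaling — confirming that the powers of $(-\tau)$ coming from $\dif^3\pp$, from $\abs{\pp}^{-1}$, from the two $\omega_\pp^{-1}$ and from converting each $A$ into a function of $\tau\kappa$ and $\tau p$ combine to leave no residual $\tau$-prefactor, while keeping all signs straight because $\tau<0$.
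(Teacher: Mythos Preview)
Your approach is essentially the same as the paper's: the paper's proof is a single sentence noting that after the change of variables $\vec{x} = \tau\,\pp$ (your $\vec{q} = -\tau\,\pp$) the integrand in \eqref{eq:B0_with_A} depends only on $k_1\tau,k_2\tau,k_3\tau$. Your write-up is in fact more complete than the paper's, since you make explicit the rotational-invariance argument together with the momentum-conservation constraint $\kk_1+\kk_2+\kk_3=0$ that reduces dependence on the vectors $\tau\kk_i$ to dependence on the magnitudes $k_i\tau$; the paper leaves this step implicit.
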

\begin{proof}
  Analogously to proposition \ref{prop:P0_k3}, we note that after a change of variables $\vec{x} = \tau\, \vec{p} $ the integrand in \eqref{eq:B0_with_A} is a function of $k_1 \tau$, $k_2 \tau$, and $k_3 \tau$ only.
\end{proof}

To finish our discussion about non-Gaussianities, we notice that, although the employed quantum field is a linear one, we obtained a three-point function for $\Psi$ which is similar to the one obtained by Maldacena \cite{maldacena:2003} who has quantized metric perturbations outside the linear approximation.


\section{Conclusions}
\label{sec:conclusions}

In this paper we have analyzed the influence of quantum matter fluctuations on metric perturbations over de Sitter backgrounds.
We used techniques proper of quantum field theory on curved spacetime to regularize the stress-energy tensor and to compute its fluctuations.
In particular, we interpreted the perturbations of the curvature tensors as the realization of a stochastic field.
We then obtained the $n$-point functions of such a stochastic field as induced by the $n$-point functions of a quantum stress tensor by means of semiclassical Einstein equations.

The proposed approach bears some resemblance to stochastic gravity but considers also higher moments of the stress-energy tensor.
However, contrary to \eg stochastic gravity, the formalism is presented in an \emph{ad hoc} fashion and is not formally derived from an eventual quantum gravity.
Nevertheless, we saw that a reasonable power spectrum can be obtained.

We also noticed that, while the expectation value of the stress-energy tensor is characterized by renormalization ambiguities, this is no longer the case when fluctuations are considered.
Hence the obtained results are independent on the particular regularization used to define the stress tensor.

In order to keep superficial contact with literature on inflation, we investigated perturbations of the scalar curvature generated by a Newtonian metric perturbation, which is related to the standard Bardeen potentials.
However, the considered model is certainly oversimplified to cover any real situation and is not gauge invariant.

Within this model it was possible to recover an almost-Harrison--Zel'dovich power spectrum for the considered metric perturbation.
Furthermore, the amplitude of such a power spectrum depends on the field mass which is a free parameter in our model and can be fixed independently of $H$.
At the same time, since it does not depend on the Hubble parameter of the background metric, this indicates that it is not a special feature of de Sitter space.
At least close to the initial singularity, the obtained result depends only on the form of the most singular part of the two-point function of the considered Bunch--Davis state.
We thus argue that a similar feature is present in every Hadamard state.
Furthermore, as conjectured in an earlier draft of this paper, it has been recently proven in \cite{dappiaggi:2014}, that similar effects holds also for backgrounds which are only asymptotically de Sitter in the past.

Finally we notice that, since the stress-energy tensor is not linear in the field, its probability distribution cannot be of Gaussian nature.
Thus we showed that non-Gaussianities arise naturally in this picture.

\begin{acknowledgments}
  We would like to thank C. Dappiaggi, K. Fredenhagen, T.-P. Hack and V. Moretti for helpful discussions. The work of N.P. has been supported partly by the Indam-GNFM project ``Effetti topologici e struttura della teoria di campo interagente''.
\end{acknowledgments}


\small

\end{document}